\documentclass[11pt]{article}
\usepackage[T1]{fontenc}
\usepackage[utf8]{inputenc}
\usepackage{lmodern,microtype,amsmath,amssymb,amsthm,booktabs,tabularx,graphicx}
\usepackage[margin=1in]{geometry}
\usepackage[round,authoryear]{natbib}
\usepackage[colorlinks=true,linkcolor=black,citecolor=black,urlcolor=blue]{hyperref}
\usepackage{enumitem,listings}
\hypersetup{pdftitle={When Old Facts Return: Re-Reads, Reverts, and the Limits of Temporal Memory},pdfauthor={Neeraj Yadav}}
\setlist{itemsep=3pt,topsep=5pt}
\newtheorem{proposition}{Proposition}
\title{When Old Facts Return:\\Re-Reads, Reverts, and the Limits of Temporal Memory}
\author{Neeraj Yadav\\Called It Inc.\\\url{https://memstrata.dev}}
\date{October 2026}
\begin{document}
\maketitle
\begin{abstract}
A memory system can retire an obsolete value and later restore it merely because the same old statement appears again. A re-read of an old source and a genuine revert can produce the same observed sequence of values while requiring opposite current answers. We study this ambiguity on 130 extractor-selected atomic transitions derived from software fixes. In the ordinary transition condition, identity-based temporal memory reaches 98.5\% model-judged accuracy with zero observed errors under a literal stale-value proxy. Appending a verbatim re-read of the old statement reduces accuracy to 10.8\% and raises the stale-value rate to 88.5\%. A guard that refuses to reactivate a previously retired value restores accuracy to 97.7\% and reduces that rate to 0.8\% in this constructed re-read condition. The guard cannot also recognize a legitimate revert without additional change provenance. Two supporting studies examine exposing retired history to the answer model and supplying current source for changed behavior. An exploratory extraction study over 707 software fixes provides scope context, not a universal coverage estimate. The design implication is to distinguish an observation of a value from evidence that the value changed. Selected inputs, aggregate-only answer records, related-family judges and a post-failure guard evaluation limit the conclusions to the retained experiments.
\end{abstract}

\section{Introduction}
A useful memory must do more than remember that a statement occurred. It must help a downstream system determine whether the statement still applies. Retrieval-augmented generation provides a way to locate relevant material \citep{lewis2020retrieval}, but a relevant passage may describe an obsolete state. Temporal memory addresses that problem by retaining identity and the relationship between successive values.

The simple transition case is straightforward. If a configuration changes from value $A$ to value $B$, the memory can retain $B$ as current and preserve $A$ in history. Earlier temporal-memory studies evaluated this operation on evolving knowledge and a selected atomic subset of software histories \citep{yadav2026temporal,yadav2026software}. The next observation makes the problem more interesting: an agent opens an old document containing $A$. Should that observation restore $A$?

Sometimes it should. A verified revert can legitimately return a project to an earlier state. Sometimes it should not. Re-reading an obsolete file adds an observation of history without changing the project. Values and mention timestamps alone cannot distinguish these cases. A memory that always prefers the latest mention can resurrect stale information. A memory that never reactivates an old value can suppress a real change.

The central contribution is the measured re-read failure and the provenance requirement exposed by the corresponding guard. The taxonomy and extraction experiments help delimit the class of software changes to which atomic supersession applies; they do not support a claim that a fixed fraction of all software evolution has been solved. We reuse the 130-scenario cache and ordinary-transition setting of the earlier software study; neither is a new independent replication here. The additional recency conditions, re-read perturbation and guard, consumer-context studies, and exploratory declined-set analysis form the present investigation.

We ask three questions. First, do recency baselines preserve both relevance and currency on accepted atomic changes? Second, what happens when an old value is observed again after its replacement? Third, how should a memory expose history and handle changes that are not atomic values? The experiments are descriptive single-run studies under retained local-model protocols. They support a narrow architectural argument, not a general product ranking.

\section{Background and scope}
\subsection{Temporal records and belief revision}
Bitemporal data models distinguish when a value applies from when a system records it \citep{snodgrass1995tsql2}. This distinction is directly relevant to repeated observations: the time of ingestion is not necessarily the time of a world change. Truth-maintenance and belief-revision systems likewise distinguish stored beliefs, their justification and the consequences of revision \citep{doyle1979tms,agm1985}. Event-based formulations provide another way to represent change explicitly \citep{kowalski1986events}.

We do not claim to introduce these distinctions. The contribution is an operational study of their consequences for a retrieval memory consumed by a language model, including a failure of the unguarded temporal implementation. The system needs more than a current-value table if its inputs do not distinguish change from observation.

\subsection{Software changes are heterogeneous}
Some changes replace a literal, endpoint or declared parameter. Others alter control flow, interactions among functions or an algorithm. Fine-grained source differencing can identify syntactic changes \citep{fluri2007changedistiller,falleri2014gumtree}; contract-oriented representations can describe selected interface changes \citep{meyer1992contract}. Neither operation makes every behavioral change reducible to a single value-bearing triple.

Our source population comes from SWE-bench Lite and Verified \citep{jimenez2024swebench}. It consists of issue-linked fixes in a limited set of mature Python repositories. It is not a random sample of all code evolution, and the accepted atomic set was selected by an earlier extraction pipeline. That selection is central to the interpretation of the results.

\section{An information boundary for current-state memory}
\subsection{Observation model}
Let $k$ denote a normalized subject-relation key and $v$ a value. An observation is $o=(k,v,t_o)$, where $t_o$ is observation time. A change-aware input may additionally provide an effective time $t_v$ and provenance $p$ identifying an authoritative change event. A current-state resolver returns a value, a conflict or an unknown result.

For ordinary monotonic changes, the transition $A\rightarrow B$ closes the earlier active interval and inserts the new value atomically. Duplicate observations of the same active value need not create a new state change. The difficult sequence is $A,B,A$ when the final $A$ lacks a trusted event classification.

\begin{proposition}[Re-read and revert indistinguishability]
Suppose a resolver observes only key, value and mention order. There exist a re-read history and a revert history with identical observations for which the correct current values differ. No deterministic resolver using only those observations can return the correct current value in both histories.
\end{proposition}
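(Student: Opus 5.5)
The argument is a direct construction followed by a one-line functional argument. First, fix a key $k$ and two distinct values $A\neq B$. Both histories produce the same observation sequence
\[
o_1=(k,A,1),\quad o_2=(k,B,2),\quad o_3=(k,A,3),
\]
so the resolver's visible input, the projection onto key, value and mention order, is the triple $(k,A),(k,B),(k,A)$ in both cases. The difference lies entirely in the underlying world history, which the resolver does not observe.

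In the \emph{re-read history}, the world changes once, from $A$ to $B$, at the time of $o_2$. The third observation is an agent re-opening an old source written before that change. It records $A$ as a fact about the past, so the true current value at the end is $B$. In the \emph{revert history}, the world changes twice: $A\to B$ at $o_2$ and then a genuine revert $B\to A$, reported by $o_3$. The true current value is $A$. I will state explicitly the semantics that fixes correctness: the correct current value is the value of the most recent world change, where world changes are the latent events that the observation model of this section separates from observations via the optional $t_v$ and $p$. Under the hypothesis these are withheld from the resolver.

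Next, model a deterministic resolver as a function $R$ from finite observation sequences, restricted to key, value and order, to outputs in $\{\text{value},\text{conflict},\text{unknown}\}$. Since both histories yield the same input sequence $s$, $R$ returns the same output $R(s)$ on both. The correct answers are $B$ and $A$, and $A\neq B$, so $R(s)$ can equal at most one of them. A conflict or unknown output equals neither. Hence $R$ is wrong on at least one history, which proves the claim.

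The main obstacle is not the pigeonhole step but making the construction legitimate. I need to ensure that "identical observations" really means that every input the resolver is allowed to see coincides. In particular, $t_o$ must be reduced to mention order only, and no source metadata may leak. I also need the definition of the correct current value to depend on latent change events rather than on observations. I would handle this by defining a history as a pair consisting of a world-change sequence and an observation sequence, with a projection map to the resolver's input. The proposition then becomes the statement that this projection is not injective on the ground-truth function. I would also remark that the argument extends to randomized resolvers: the output distribution is identical on the two histories, so the success probabilities of the two cases sum to at most $1$.
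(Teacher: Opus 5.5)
Your proposal is correct and takes essentially the paper's approach: the same $A,B,A$ observation sequence, read once as a re-read world (current value $B$) and once as a genuine-revert world (current value $A$), followed by the same argument that a deterministic function of identical inputs cannot match both distinct targets. Your explicit handling of conflict and unknown outputs and your randomized extension are sound refinements that go beyond what the paper states.
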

\begin{proof}
Construct two histories with observations $(k,A,t_1)$, $(k,B,t_2)$ and $(k,A,t_3)$, with $t_1<t_2<t_3$. In the first history, the state changes to $B$ at $t_2$ and the last observation re-reads an old source. The correct current value is $B$. In the second, an authoritative change at $t_3$ restores $A$, so the correct current value is $A$. The resolver receives the same input sequence in both cases and therefore returns the same output. That output cannot equal both distinct required values.
\end{proof}

The proposition states an input limitation, not a new result in formal epistemology. Its engineering consequence is concrete: a resolver must obtain additional provenance, accept a policy assumption, or abstain when the distinction matters. Replacing a similarity threshold with a more confident model does not supply information absent from the input.

\subsection{The evaluated guard}
The no-resurrect guard checks whether an incoming value for a key already appears among that key's retired values. If so, it does not make the value active merely because it was observed again. On histories where an old value never legitimately returns, this policy handles the re-read case. On a real revert, it requires an additional trusted change signal or it keeps the wrong current value.

The guard is evaluated as an opt-in mechanism. It is not a universal policy recommendation. We deliberately report its correctness cost with its re-read benefit rather than relegating the revert ambiguity to an unspecified future improvement.

\subsection{Current and historical views}
A current-value context should prefer the evidence establishing the active state. A historical query intentionally exposes retired values and their times. Presenting both views in one prompt can reintroduce the stale value as an answer candidate even when the storage layer resolved the state correctly. We evaluate that consumer boundary separately.

\section{Data and methods}
\subsection{Populations}
The union of the retained SWE-bench Lite and Verified files contains 707 distinct issue IDs. The earlier pipeline accepted 130 clean atomic transitions and declined 577. The 130-item set supplies the ordinary and re-read experiments. The declined set supplies exploratory taxonomy and contract-extraction analyses.

Each accepted scenario has an old value $A$, a new value $B$ and a question whose current answer is $B$. These are cached model-extracted, rendered statements, not the original issue-and-patch histories. Text is marker-free: the input statements are not helpfully labeled ``stale'' and ``current.'' The ordinary run ingests all 130 old statements followed by all 130 new statements into a shared store. The re-read run appends all 130 old statements verbatim, producing 390 turns and per-key order $A,B,A$, while the current target remains $B$. This shared-store construction allows cross-key retrieval interference. We do not claim full independence among questions.

\subsection{Compared conditions}
Naive RAG retrieves from retained statements without temporal retirement. Advanced RAG adds the evaluated language-model reranker. Global-recency RAG reorders a retrieved pool by ingestion time. The recency reranker receives an ordered candidate list and instructions favoring recent relevant statements. Temporal memory resolves identity-based supersession, with the no-resurrect guard disabled or enabled for the relevant condition.

The recency baselines represent specific tested implementations. A global-recency sort can sacrifice relevance, while a model instructed about recency can fail to select the right value. Their results do not establish that every recency-aware retriever must fail in the same way.

\subsection{Models, scoring and reporting}
The answer and write/read verification model is \texttt{qwen2.5-coder:7b}. Correctness judgments use \texttt{qwen2.5-coder:3b} \citep{hui2024qwen}. Embeddings use \texttt{nomic-embed-text} with 768 dimensions \citep{nussbaum2024nomic}. Models are served locally through Ollama. The retained defaults specify temperature zero and seed zero. The answer and correctness models differ in size but belong to the same model family; this is not independent-family validation. The forced-answer instruction requests an answer but does not guarantee that the model attempts every item. Appendix~\ref{app:protocol} gives the prompt and scoring rules.

Accuracy is the fraction judged correct. The reported \emph{stale-answer error} is a literal-value proxy: after lowercasing the answer, the old value must occur as a substring and the current value must not. It is not a semantic judgment of whether the answer asserts the old value as current. It can miss paraphrases and answers containing both values, and can flag an old value mentioned in a negation. Attempted answers use the retained abstention-string heuristic. These quantities are different: a system can lower stale error by refusing more often while becoming less useful.

The results are retained single-run measurements. Exact model-blob digests, complete request receipts and an execution-environment image were not preserved for all studies. The no-resurrect guard was evaluated after the unguarded failure was observed, on the same cases, rather than on a held-out confirmatory set. We therefore avoid claims of bitwise environmental replay, preregistered inference or repeated-run significance. Aggregate logs and the accompanying offline verifier reproduce the reported descriptive quantities; they cannot independently rejudge unavailable individual answers.

\section{Recency in the ordinary transition condition}
Table~\ref{tab:plain} shows that the tested recency methods do not preserve both relevance and currency on the selected set. Temporal memory reaches 98.5\% model-judged accuracy and zero detected stale-value matches. The global-recency sort achieves low stale error but attempts only 72 of 130 questions.

\begin{table}[ht]
\centering
\caption{Ordinary marker-free transitions, $n=130$. Values are retained rounded rates. The requested forced-answer regime still permits observed model abstentions.}
\label{tab:plain}
\begin{tabular}{lrrr}\toprule
Condition & Accuracy & Stale error & Attempted\\\midrule
Naive RAG & .615 & .361 & 130\\
Advanced RAG & .592 & .377 & 129\\
Global-recency RAG & .508 & .008 & 72\\
Recency reranker & .592 & .415 & 129\\
Temporal memory & .985 & .000 & 129\\\bottomrule
\end{tabular}
\end{table}

The global sort promotes recent statements that may concern another key. For a question about a less recently mentioned subject, the current relevant value can disappear from the selected context. This is a mechanism-level explanation consistent with the shared-store design and abstention counts, not a retained per-question causal audit. The low stale rate does not represent a free improvement. The language-model recency reranker also does not eliminate stale-value matches in this protocol.

These conditions motivate identity-aware state handling, but they do not yet test whether the latest observation is a real change. The next experiment removes that assumption.

\section{When an old statement returns}
Appending an old statement after its replacement changes the outcome sharply. The unguarded temporal implementation treats the final old value as a new contradiction and reactivates it. It becomes worse than naive RAG on stale-answer error in this condition.

\begin{table}[ht]
\centering
\caption{The re-read condition, $n=130$, with observation order $A,B,A$ and current target $B$. RAG baseline results are retained unchanged when the assertion guard is enabled.}
\label{tab:reread}
\begin{tabular}{lrr}\toprule
Condition & Accuracy & Stale error\\\midrule
Naive RAG & .531 & .469\\
Global-recency RAG & .069 & .739\\
Temporal memory, guard disabled & .108 & .885\\
Temporal memory, guard enabled & .977 & .008\\\bottomrule
\end{tabular}
\end{table}

The guard reduces stale-answer error by 87.7 percentage points in this constructed re-read condition. Its benefit comes from the policy assumption that a previously retired value should not regain authority merely by reappearing. It does not come from better semantic similarity or stronger downstream answer reasoning.

\begin{figure}[ht]
\centering\includegraphics[width=\linewidth]{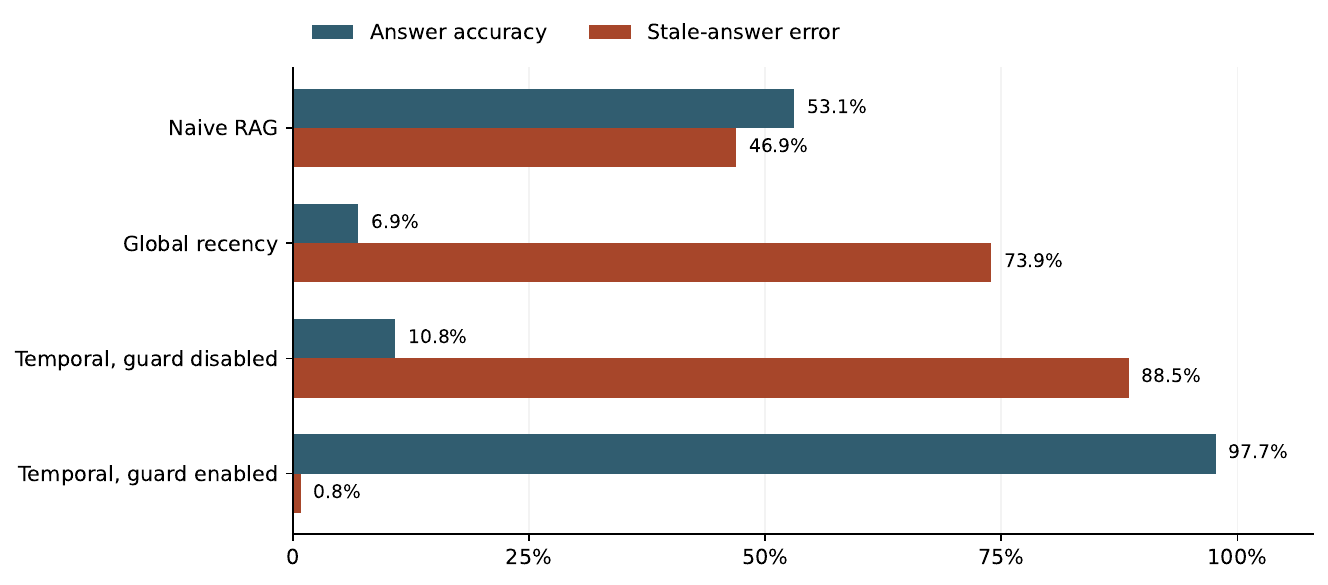}
\caption{Re-reading an old value defeats the unguarded ledger. The guard succeeds under the re-read assumption, while a genuine revert remains a provenance-dependent case.}
\end{figure}

The genuine-revert counterexample is not an additional measured result in this table. It follows from the guard's stated behavior and the indistinguishability construction. A complete deployment must decide which sources can establish a revert, how their authority is verified and what to return when the authority is unavailable.

\section{Two further boundaries at the consumer}
\subsection{Historical explanations can damage current answers}
A separate retained experiment presents the same 130 current-value questions with either active-only context or a context saying the value ``was $A$; is now $B$.'' Active-only context reaches .954 accuracy and .008 stale error. Including retired history reaches .846 accuracy and .146 stale error.

The result supports separate current and historical read modes. A correct temporal store does not ensure that a language model will disregard an obsolete value if the value is reintroduced into its answer context. The experiment does not imply that historical explanations are always harmful; their purpose and the requested answer differ in a genuine historical query.

\subsection{Behavioral changes require current source}
The logic-staleness mechanism test contains 40 synthetic scenarios. Serving only the stale summary reaches .000 accuracy and .475 stale error. Serving current raw source reaches .975 accuracy and .000 stale error. Serving current source with the evaluated warning reaches 1.000 accuracy and .000 stale error.

Most of this improvement comes from supplying current source. The one-item difference between raw source and raw source plus warning is not evidence of a general warning advantage. The useful mechanism is deterministic invalidation of a derived summary followed by retrieval of current evidence.

This is a synthetic context intervention, not an end-to-end test of a change detector. Twenty cases change a threshold literal and twenty change an arithmetic operator; all are generated from a small set of templates. The warning condition also retains the stale summary, whereas the raw-source condition does not, so their contrast changes two context components. The study does not establish real-project behavioral coverage or show that arbitrary logic has been converted into atomic facts. Dependency propagation similarly needs its own coverage analysis, particularly for dynamic edges that static source analysis cannot recover.

\begin{table}[ht]
\centering
\caption{Distinct consumer studies. The 130-item historical-context study and 40-item synthetic logic study are not pooled.}
\begin{tabular}{llrr}\toprule
Study & Context & Accuracy & Stale error\\\midrule
Current questions, 130 & Active only & .954 & .008\\
Current questions, 130 & Active plus history & .846 & .146\\
Synthetic logic, 40 & Stale summary only & .000 & .475\\
Synthetic logic, 40 & Current raw source & .975 & .000\\
Synthetic logic, 40 & Current source and warning & 1.000 & .000\\\bottomrule
\end{tabular}
\end{table}

\section{Exploratory scope of atomic extraction}
The earlier pipeline accepted 130 of 707 fixes, or 18.4\%, as clean atomic transitions. A later full-file contract extractor processed 576 of the 577 declined fixes and emitted at least one supported change on 38, or 6.6\% of the applied declined set. The combined accepted-or-candidate-covered count is 168/707, or 23.8\%.

This count is candidate coverage, not solved-task coverage. The 38 extractor-positive fixes did not receive a complete fix-level manual precision review. A supported signature change can be incidental to the main issue. The aggregate artifact does not identify the one unapplied item. We therefore report the result as a description of the pipeline rather than a general ceiling on what temporal memory can represent.

The model taxonomy of the 577 declined fixes assigns 6 to value changes, 1 to renames, 109 to structural changes, 405 to logic or control flow and 56 to other. The classifier sees at most 1,500 issue characters and 2,500 patch characters; its parser takes the first category digit in the response and defaults to logic if none is found. An 82-item stratified author audit agrees with 32 labels, with Cohen's $\kappa=.144$ \citep{cohen1960kappa}. The retained scoring record attributes these labels to the study author. The annotator saw the model label as well as truncated issue and patch excerpts, the sample oversampled rare categories, and there was no independent second annotator or adjudication.

These limitations rule out using the taxonomy as a human-validated population estimate. They also make confidence intervals around its model-label proportions an incomplete description of uncertainty. The general lesson is that the accepted atomic slice is limited under the evaluated pipeline; a precise universal percentage would require a stronger annotation and task-success study.

\section{Implications for memory systems}
\subsection{Authority belongs at the write boundary}
An input should distinguish a source observation from an authoritative change when the source can provide that information. The memory can then bind effective time and provenance to a transition instead of inferring world change from arrival order. Sources without this authority can remain evidence candidates or produce an unresolved state.

An authoritative signal cannot simply be an arbitrary string in a retrieved document. The producer or integration must establish its origin and scope. Otherwise a malicious or obsolete source can promote its own content by claiming to be a revert or correction.

\subsection{Current state and retained history serve different consumers}
The active-value view is optimized for questions about current state. The historical view supports reconstruction and explanation. Keeping those views separate does not require deleting history. It requires the consumer to request the appropriate mode and the context builder to honor it.

\subsection{A fallback should state its guarantee}
When a change is not representable as an atomic value, invalidation plus current source is an appropriate weaker contract. It tells the reader that a derived summary may no longer apply. It does not turn a static-analysis approximation into a proof of runtime behavior.

This distinction also clarifies evaluation. Test supersession on declared atomic changes, invalidation on source mutations and answer reasoning on the evidence actually delivered. A single aggregate accuracy score cannot establish all three properties.

\section{Limitations and reproducibility}
The accepted dataset is selected for clean atomic transitions. The re-read adversary deliberately supplies an obsolete statement after the current one. It isolates a meaningful failure mode but does not estimate its prevalence in deployed agents. The guard's revert cost follows from its policy and needs a separate real-history evaluation if a deployment intends to enable it broadly.

The model and scoring pipeline are limited to the retained local configurations. Temperature zero and a fixed seed do not eliminate all runtime nondeterminism. Several supporting measurements are aggregate summaries, so no new paired significance claims are made for them. The 40-case logic experiment is synthetic, and the taxonomy audit is neither independent nor sufficient to validate all labels.

The supplement includes the 130-scenario cache, a 707-ID population inventory, the 577 model labels and 82 annotation pairs, the aggregate contract-extraction checkpoint, relevant log excerpts, study-specific source snapshots and an offline verifier. It also includes a small executable specification of the observation policies and counterexample; that specification illustrates the stated rules and is not the proprietary experimental engine. Exported source snapshots postdate some runs and are reference material, not proof of a fully pinned execution image. A provenance manifest records source hashes and export transformations. Full benchmark patches and individual answer/judge receipts are not included. Reproducing arithmetic and policy examples is distinct from executing the experiments again.

The scenario cache contains a field named \texttt{fixed\_commit} whose historical loader could fall back to environment-setup metadata. It is not a verified patch-commit identifier and must not be used as such. The experiments here ingest rendered statements rather than checking out that field. No claim of validated revert detection from this cache is made.

\section{Conclusion}
The last observation of a value need not be the last change to that value. On 130 accepted software transitions, that distinction turns a strong ordinary temporal-memory result into a severe stale-answer failure when an old statement is re-read. A no-resurrect guard repairs the measured re-read case but cannot resolve a genuine revert from the same observations alone.

The practical requirement is explicit change provenance. When that information is unavailable, the system must expose its policy assumption or uncertainty. Current answers should also remain separate from deliberate historical explanations, and non-atomic changes should trigger freshness handling appropriate to their representation. These boundaries make temporal memory more useful precisely because they specify what it can and cannot establish.

\section*{Ethics and AI assistance}
The source material comes from public software-maintenance benchmarks. The evaluation uses derived statements and synthetic examples, not private user conversations, and does not certify the safety of any software change. The original source projects and benchmark maintainers retain their rights; dataset identities and acquisition references accompany the exports. The single-author taxonomy check is an internal, anchored audit, not independent human validation.

AI assistants were used for manuscript drafting and revision, evaluation-code work and artifact checks under the author's direction. The author is responsible for the claims, citations, code and submitted material. The author has a commercial interest in the memory system studied. Automated correctness judgments, selection by the system's own extractor and disclosure of unavailable records are material limitations, not guarantees of correctness. In deployment, a false authority signal or an over-broad key can suppress valid evidence; an unauthenticated document string must not be treated as trusted change provenance.

\bibliographystyle{plainnat}
\bibliography{references}

\begin{thebibliography}{14}
\providecommand{\natexlab}[1]{#1}
\providecommand{\url}[1]{\texttt{#1}}
\expandafter\ifx\csname urlstyle\endcsname\relax
  \providecommand{\doi}[1]{doi: #1}\else
  \providecommand{\doi}{doi: \begingroup \urlstyle{rm}\Url}\fi

\bibitem[Alchourr{\'o}n et~al.(1985)Alchourr{\'o}n, G{\"a}rdenfors, and
  Makinson]{agm1985}
Carlos~E. Alchourr{\'o}n, Peter G{\"a}rdenfors, and David Makinson.
\newblock On the logic of theory change: Partial meet contraction and revision
  functions.
\newblock \emph{The Journal of Symbolic Logic}, 50\penalty0 (2):\penalty0
  510--530, 1985.
\newblock \doi{10.2307/2274239}.

\bibitem[Cohen(1960)]{cohen1960kappa}
Jacob Cohen.
\newblock A coefficient of agreement for nominal scales.
\newblock \emph{Educational and Psychological Measurement}, 20\penalty0
  (1):\penalty0 37--46, 1960.
\newblock \doi{10.1177/001316446002000104}.

\bibitem[Doyle(1979)]{doyle1979tms}
Jon Doyle.
\newblock A truth maintenance system.
\newblock \emph{Artificial Intelligence}, 12\penalty0 (3):\penalty0 231--272,
  1979.
\newblock \doi{10.1016/0004-3702(79)90008-0}.

\bibitem[Falleri et~al.(2014)Falleri, Morandat, Blanc, Martinez, and
  Monperrus]{falleri2014gumtree}
Jean-R{\'e}my Falleri, Flor{\'e}al Morandat, Xavier Blanc, Matias Martinez, and
  Martin Monperrus.
\newblock Fine-grained and accurate source code differencing.
\newblock In \emph{Proceedings of the 29th ACM/IEEE International Conference on
  Automated Software Engineering}, pages 313--324, 2014.
\newblock \doi{10.1145/2642937.2642982}.

\bibitem[Fluri et~al.(2007)Fluri, W{\"u}rsch, Pinzger, and
  Gall]{fluri2007changedistiller}
Beat Fluri, Michael W{\"u}rsch, Martin Pinzger, and Harald~C. Gall.
\newblock Change distilling: Tree differencing for fine-grained source code
  change extraction.
\newblock \emph{IEEE Transactions on Software Engineering}, 33\penalty0
  (11):\penalty0 725--743, 2007.
\newblock \doi{10.1109/TSE.2007.70731}.

\bibitem[Hui et~al.(2024)Hui, Yang, Cui, Yang, Liu, Zhang, et~al.]{hui2024qwen}
Binyuan Hui, Jian Yang, Zeyu Cui, Jiaxi Yang, Dayiheng Liu, Lei Zhang, et~al.
\newblock Qwen2.5-coder technical report, 2024.
\newblock URL \url{https://arxiv.org/abs/2409.12186}.

\bibitem[Jimenez et~al.(2024)Jimenez, Yang, Wettig, Yao, Pei, Press, and
  Narasimhan]{jimenez2024swebench}
Carlos~E. Jimenez, John Yang, Alexander Wettig, Shunyu Yao, Kexin Pei, Ofir
  Press, and Karthik Narasimhan.
\newblock {SWE-bench}: Can language models resolve real-world {GitHub} issues?
\newblock In \emph{The Twelfth International Conference on Learning
  Representations}, 2024.
\newblock URL \url{https://openreview.net/forum?id=VTF8yNQM66}.

\bibitem[Kowalski and Sergot(1986)]{kowalski1986events}
Robert Kowalski and Marek Sergot.
\newblock A logic-based calculus of events.
\newblock \emph{New Generation Computing}, 4\penalty0 (1):\penalty0 67--95,
  1986.
\newblock \doi{10.1007/BF03037383}.

\bibitem[Lewis et~al.(2020)Lewis, Perez, Piktus, Petroni, Karpukhin, Goyal,
  Kuttler, Lewis, Yih, Rockt{\"a}schel, Riedel, and Kiela]{lewis2020retrieval}
Patrick Lewis, Ethan Perez, Aleksandra Piktus, Fabio Petroni, Vladimir
  Karpukhin, Naman Goyal, Heinrich Kuttler, Mike Lewis, Wen-tau Yih, Tim
  Rockt{\"a}schel, Sebastian Riedel, and Douwe Kiela.
\newblock Retrieval-augmented generation for knowledge-intensive {NLP} tasks.
\newblock In \emph{Advances in Neural Information Processing Systems},
  volume~33, pages 9459--9474, 2020.

\bibitem[Meyer(1992)]{meyer1992contract}
Bertrand Meyer.
\newblock Applying ``design by contract''.
\newblock \emph{Computer}, 25\penalty0 (10):\penalty0 40--51, 1992.
\newblock \doi{10.1109/2.161279}.

\bibitem[Nussbaum et~al.(2024)Nussbaum, Morris, Duderstadt, and
  Mulyar]{nussbaum2024nomic}
Zach Nussbaum, John~X. Morris, Brandon Duderstadt, and Andriy Mulyar.
\newblock Nomic embed: Training a reproducible long context text embedder,
  2024.
\newblock URL \url{https://arxiv.org/abs/2402.01613}.

\bibitem[Snodgrass(1995)]{snodgrass1995tsql2}
Richard~T. Snodgrass, editor.
\newblock \emph{The {TSQL2} Temporal Query Language}.
\newblock Kluwer Academic Publishers, Boston, 1995.
\newblock \doi{10.1007/978-1-4615-2289-8}.

\bibitem[Yadav(2026{\natexlab{a}})]{yadav2026software}
Neeraj Yadav.
\newblock Temporal validity on real software histories: Eliminating stale-fact
  errors in code-assistant memory over github fixes, 2026{\natexlab{a}}.
\newblock URL \url{https://arxiv.org/abs/2608.20685}.

\bibitem[Yadav(2026{\natexlab{b}})]{yadav2026temporal}
Neeraj Yadav.
\newblock Temporal validity in retrieval memory: Eliminating stale-fact errors
  for ai agents over evolving knowledge, 2026{\natexlab{b}}.
\newblock URL \url{https://arxiv.org/abs/2606.26511}.

\end{thebibliography}
\appendix
\section{Retained protocol and measurement contract}\label{app:protocol}
All 130 current-value targets remain $B$ in the ordinary and re-read conditions. The cached sentences were admitted by the earlier atomic-extraction pipeline; this is not a new test of extracting transitions from raw repositories. The guard-enabled experiment reran the temporal condition only. The two retrieval baselines in Table~\ref{tab:reread} are carried forward unchanged from the guard-disabled run, not counted as new replications.

The forced-answer prompt in the retained source snapshot is:
\begin{lstlisting}
Answer the QUESTION with the single most specific value from the CONTEXT.
If the CONTEXT shows MORE THAN ONE value for the thing asked, choose the
one you believe is CURRENT. You MUST give a specific value - never reply
'I don't know', never abstain.

CONTEXT:
<context>

QUESTION: <question>
ANSWER:
\end{lstlisting}
Here the dash is typographically normalized; the exported reference source retains the original prompt bytes. The correctness prompt requests exactly \texttt{CORRECT} or \texttt{INCORRECT} after presenting the question, gold answer and model answer. A response is counted correct if its uppercase text contains \texttt{CORRECT} but not \texttt{INCORRECT}. The literal stale-value rule is independent of this judge verdict. The aggregate harness rounds rates to four decimals before the report formats three; tables preserve the logged values rather than pretending to recover unretained raw verdicts.

The history intervention directly constructs either ``The subject is $B$'' or ``The subject was $A$; it is now $B$.'' The logic intervention directly supplies stale summary, current code, or warning plus both. Neither intervention measures whether a retrieval system would independently select the correct code or history mode.

\section{Policy illustration and counterexample}
Let $c_k$ be the active value and $R_k$ the set of retired values for a resolved key. With the guard disabled, a different incoming value retires $c_k$ and replaces it. With the guard enabled, an incoming value in $R_k$ leaves $c_k$ unchanged unless a trusted, out-of-band change signal authorizes reactivation. Equal active values merely reinforce the existing state.

Starting empty, observations $A,B,A$ therefore end in $A$ without the guard and $B$ with it. The two worlds in Proposition~1 require opposite answers for precisely these observations. A trusted revert event supplies information absent from this sequence and permits $A$ under the extended policy. The supplement tests these rules, multiple keys, repeated observations and the distinction between an untrusted source claim and trusted authority. These are offline specification tests, not new language-model results or an evaluation of a real commit-graph detector.

\section{Supplement coverage}
\begin{center}
\begin{tabularx}{\linewidth}{@{}lX@{}}\toprule
Included evidence & What the offline checks establish\\\midrule
Scenarios and ID inventory & 707 unique source IDs, 130 accepted IDs and 577 declined IDs; no hidden cohort replacement.\\
Taxonomy and audit pairs & Model-label counts, the 82 sampled IDs, 32 agreements and the reported kappa.\\
Coverage checkpoint & 576 applied, 38 candidate-covered and repository/kind sums; not fix-level precision.\\
Aggregate log excerpts & Reported recency, re-read, history and synthetic-context cells; not independent grading.\\
Policy specification & The stated re-read/revert counterexample and authority-sensitive state transitions.\\
Source snapshots & Inspectable construction and scoring code; no claim that all dependencies or historical execution state are present.\\\bottomrule
\end{tabularx}
\end{center}
The verifier makes no network, embedding, reader or judge call. A successful check establishes consistency of the supplied records and manuscript values, not correctness of the original model judgments. No paired significance test is reconstructed from aggregate-only answer logs.
\end{document}